\title{Parameterized Fine-Grained Reductions}
\titlerunning{}
\author{Elli Anastasiadi}{National Technical University of Athens, Greece}{ellie@corelab.ntua.gr}{}{}
\author{Antonis Antonopoulos}{National Technical University of Athens, Greece}{aanton@corelab.ntua.gr}{}{}
\author{Aris Pagourtzis}{National Technical University of Athens, Greece}{pagour@cs.ntua.gr}{}{}
\author{Stavros Petsalakis}{National Technical University of Athens, Greece}{stpetsalakis@corelab.ntua.gr}{}{}
\authorrunning{E. Anastasiadi, A. Antonopoulos, A. Pagourtzis, S. Petsalakis}
\keywords{Fine-Grained Complexity, Parameterized Complexity, Fine-Grained Reductions}
\begin{document}

\maketitle

\begin{abstract} 
During recent years the field of fine-grained complexity has bloomed to produce a plethora of results, with both applied and theoretical impact on the computer science community. The cornerstone of the framework is the notion of fine-grained reductions, which correlate the exact complexities of problems such that improvements in their running times or hardness results are carried over. We provide a parameterized viewpoint of these reductions (PFGR) in order to further analyze the structure of improvable problems and set the foundations of a unified methodology for extending algorithmic results. In this context, we define a class of problems (FPI) that admit fixed-parameter improvements on their running time. As an application of this framework we present a truly sub-quadratic fixed-parameter algorithm for the orthogonal vectors problem. Finally, we provide a circuit characterization for FPI to further solidify the notion of improvement. 
\end{abstract}

\section{Introduction}
Fine-Grained Complexity deals with the exact complexity of problems, and establishes a web of refined reductions, that preserve exact solving times. While many of the key ideas come from well-known frameworks (NP-completeness program, parameterized algorithms and complexity, etc.), this significant new perspective has emerged only recently \cite{Wil15}.

The main question posed by this new field is ``\textit{given a problem known to be solvable in $t(n)$ time, is there an $\varepsilon>0$ such that it can be solved in $t^{1-\varepsilon}(n)$?}''. In the case such a result exists, we can connect this improvement to algorithmic advances among different problems. In the case it does not, we can establish conditional lower bounds based on this hardness, as is usually done with popular conjectures \cite{AWY18,WW18}. Several such conjectures are used, such as the Orthogonal Vectors Conjecture (OVC), the Strong Exponential Time Hypothesis (SETH), the APSP conjecture etc. It has been shown that OVC is implied by SETH \cite{Wil05}, and the variants and consequences of both conjectures have been extensively studied.

A main difference between the fine-grained approach and classical complexity theory is that all NP-complete problems form an equivalence class modulo polynomial-time reductions. On the contrary, fine-grained reductions produce a much more complex web. Many problems stem from SETH/OVC, others from the 3-SUM conjecture (especially Computational Geometry problems \cite{GAJ012}), and very few equivalence classes are known (a significant exception is the equivalence class for APSP \cite{AGW15}, \cite{WW18}). These observations raise questions concerning the structural complexity of fine-grained reducibility, as has traditionally been the case in other fields of complexity theory: Conditional irreducibility results, the morphology of the equivalence classes formed, fine-grained completeness notions, consequences of partitioning problems into classes etc., propose a fine-grained structural complexity program.

On the other hand, the parameterized point of view is dominant in theoretical computer science during the last decades. ETH and SETH were introduced in that context, and used widely to establish conditional lower bounds (SETH-hardness). Additionally, Fixed Parameter Tractability (FPT)\footnote{A problem is called Fixed Parameter Tractable, if there is a parameterization $k$ such that the problem can be solved in time $f(k)\cdot poly(n)$, for a computable function $f$.}, gave a multi-variable view of complexity theory, as well as the means to \emph{concentrate the hardness} of problems to a certain parameter, instead of the input size: many problems have significant improvements on their complexity, if one restricts them to instances having fixed $k$, where $k$ is a parameter of the aforementioned problems. This can be viewed as an indication of the structural importance of $k$ in each problem.

Similar techniques can be used to differentiate versions of problems, as has been seen recently in the case of fine-grained conjectures (e.g. problems on sparse graphs \cite{GIKW17}).

\subsection{Motivation}

The conditional bounds shown by fine-grained reductions stem from relating improvements between the conjectured best running times of two problems. This has resulted in an effort to classify problems either through equivalence or hardness via a minimal element (OV-hardness) \cite{CW19}.

Additionally, most known fine-grained reductions inherently relate the problems in more than trivial ways, also mapping specific parameters of each problem to one another. This could indicate relations between the problems' property of concentrating hardness to certain parameters.

On the other hand, while parameterized complexity has traditionally been concerned with breaching the gap between polynomial and exponential running times, there has recently been interest in fixed-parameter improvements amongst polynomial time problems (sometimes referred to as FPT in P \cite{Wil15,GMN15}). 

The ability of fine-grained complexity to express correlations between problems of various running times, comes with some inherent theoretical obstacles. Specifically, the new viewpoint of a problem's "hardness" is associated with the capability to improve its running time. This results in a "counter-intuitive" notion of hard problems, as they frequently correspond to (in classical terms) easy ones. Moreover, the foundation on which this framework is based, allows the computational resources of a fine-grained reduction to change depending on the participant problems. This produces vagueness regarding what would be considered a complexity class compatible with such reductions.

Our concern is to surpass the inherent difficulties of the field towards constructive methods that produce generalizable results, as well as to contribute to the effort of establishing structural foundations for the framework. This could result in furthering our understanding of what constitutes  difficulty in computation, as well as to structurally define improvability.

\subsection{Our Results}
We introduce Parameterized Fine-Grained Reductions (PFGR), a parameterized approach to fine-grained reducibility that is consistent with known fine-grained reductions and offers (a) tools to study structural correlations between problems, and (b) an extension to the suite of results that are obtained through the reductions. This provides a multi-variate approach of analyzing fine-grained reductions. Additionally, we give evidence that these reductions connect structural properties of the problems, such as the aforementioned concentration of hardness.

We define a class of problems (FPI) that admit parameterized improvements on their respective conjectured best running time algorithms. To this end, we treat improvements in the same way as fine-grained complexity (i.e. excluding polylogarithmic improvements in the running time). This gives us the expressive power to correlate structural properties of problems that belong in different complexity classes.

We prove that this class is closed under the aforementioned parameterized fine-grained reductions, which can be used as a tool to produce non-trivial parameterized algorithms (via the reduction process). We present such an application in the case of the reduction from OV to Diameter, in which we use a fixed parameter (with respect to treewidth) algorithm for Diameter to produce a new sub-quadratic fixed parameter algorithm for OV running in time $O\left( d^2(n+d)\log^d(n+d)\right)$ where $d$ is the dimension of the input vectors.

Finally, we use notions from parameterized circuit complexity to analyze membership in this class and introduce a circuit characterization, similar to the one used in the definition of the W-Hierarchy in parameterized complexity \cite{DT11}.

\subsection{Related Work}

The fine-grained reductions literature has quickly grown over the recent years (see \cite{Wil15} for a survey).
The basis for reductions have been some conjectures that are widely considered to be true. Namely, the 3SUM, APSP, HITTING SET \& SETH conjectures, which are associated with the respective plausibility for improvement of each problem. 

A large portion of known reductions stem from the Orthogonal Vectors problem, which is known to be SETH-hard \cite{Wil05}, thus OV plays a central role in the structure of the reductions web. Different versions of the OV conjecture were studied, usually parameterized by the dimension \cite{ABDN18}. 

The logical expressibility of problems similar to OV were studied \cite{GIKW17}, research that created a notion of hardness for problems in first-order logic, and introduced various equivalence classes (\cite{GIKW17,GI19,CW19}) concerning different versions of the OV problem, with significant applications to other fields (such as Computational Geometry). Additionally, the OV conjecture was studied in restricted models, such as branching programs \cite{KW19}.

Structural implications of fine-grained irreducibility and hypotheses were studied, culminated to new conjectures, like NSETH \cite{CGIMPS16}. New implications of refuting the aforementioned hypotheses on Circuit Lower Bounds were discovered \cite{JMV15,AHWW16,AB18}, and the refutation of SETH would imply state-of-the-art lower bounds on non-uniform circuits. Recently, fine-grained hypotheses were connected to long-standing questions in complexity theory, such as Derandomization of Complexity Classes \cite{CIS18}.

The parameterized analysis of algorithms, one of the most active areas in theoretical and applied computer science, has been frequently used to provide tools in fine-grained complexity. As such, new conjectures were formed about the solvability of polynomial-time problems in terms of parameters \cite{AWW15}.

In a notable case of similar work \cite{BK18}, the authors analyze the multivariate (parameterized) complexity of the longest common subsequence problem (LCS), taking into account all of the commonly discussed parameters on the problem. As a result, they produce a general conditional lower bound accounting for the conjunction of different parameterized algorithms for LCS: Unless SETH fails, the optimal running time for LCS is $\left( n+min\{d,\delta \Delta,\delta m\}\right)^{1\pm o(1)} $ where $d,\delta,\Delta,m$ are the aforementioned parameters. 
Note that our work is in a different direction to this result. Instead of separately reducing SETH to each different parameterized case, we give the means to show correlations between parameters in such reductions, i.e. with this framework one can analyze a single  reduction to show multiple dependencies between parameterized improvements for each problem. While this automatically produces several conditional bounds among parameterizations of problems, it proves most useful in the opposite direction, namely to transfer improvements between problems and thus derive new parameterized algorithms.

\section{Preliminaries}
We denote with $[n]$, for $n\in\mathbb{N}$, the set $\{1,\ldots,n\}$.

\begin{definition}[Generalized Parameterized Languages]
Let $L \subseteq \Sigma^*$, and $k_1,\dots,k_\ell$ parameterization functions, $k_i:\Sigma^* \to \mathbb{N}$, $1\le i \le \ell$. Let $\langle L,k_1,\dots, k_{\ell}\rangle$ denote the corresponding parameterized language.
\end{definition}
For simplicity,  we will use $\langle L \rangle$ to abbreviate $\langle L,k_1,\dots, k_{\ell}\rangle$, and $I_L$ to denote an input instance for $\langle L \rangle$. 

Note here the divergence from the classical definition, that associates each problem with only one parameter \cite{DT11}. We prefer the generalized version that allows us to describe simultaneously several structural measures of the problem, such as number of variables, number of nodes and more complex ones. For each one of those parameters we assume that there exists an index $j \in \{1, \ldots \ell \}$ such that $k_j$ corresponds to the mapping of the input instance $I_L$ to this specific parameter. 
In this way we can not only isolate and analyze different characteristics of structures but also treat each of these measures individually. 

\begin{definition}[OV]
Define the Orthogonal Vectors problem ($OV$) as follows: Given two sets $A,B \subseteq \{0,1\}^d$, with $|A|=|B|=n$, are there two vectors $a\in A$, $b\in B$, such that $a\cdot b=\sum_{i=1}^d a[i]\cdot b[i] = 0$?
\end{definition}

\begin{definition}[$3/2$-Approx-Diameter]
Given a graph $G=(V,E)$, approximate its diameter, i.e. the quantity 
$\max_{u,v \in V}d(u,v)$, within a factor $3/2$.
\end{definition}

We will also define the notion of treewidth as we will later present a result that utilizes it as a graph parameter.

\begin{definition}[Treewidth]
A \textbf{tree decomposition} of a graph $G = (V, E)$ is a tree, $T$, with nodes $X_1,X_2 \ldots X_n$ (called bags), where each $X_i$ is a subset of $V$, satisfying the following properties:
\begin{itemize}
    \item The union of all sets $X_i$ equals $V$. That is, each graph vertex is contained in at least one tree node.
    \item The tree nodes containing vertex v, form a connected subtree of T.
    \item For every edge $(v, w)$ in the graph, there is a subset $X_i$ that contains both $v$ and $w$.
\end{itemize}
The \textbf{width} of a tree decomposition is the size of its largest set $X_i$ minus one. The \textbf{treewidth} $tw(G)$ of a graph $G$ is the minimum width among all possible tree decompositions of $G$. 
\end{definition}

For more information on parameterized complexity, and treewidth the reader is referred to \cite{DT11, DF13}.

We will utilize the following notions from circuit complexity theory (for more details the reader is referred to Ch. 6 of \cite{AB09}).

\begin{definition}[Circuit Complexity] 

The circuit-size complexity of a Boolean function ${\displaystyle f} $ is the minimal size (number of gates) of any circuit computing ${\displaystyle f}$. The circuit-depth complexity of a Boolean function ${\displaystyle f}$ is the minimal depth of any circuit computing ${\displaystyle f}$.
\end{definition}

We will also use the following definition of Fine-Grained reductions from \cite{WW18}. 
\begin{definition}[Fine-Grained Reduction]
\label{FGR}
 Let $a(n), b(n)$ be nondecreasing functions of $n$.
 Problem $A$ is $(a,b)$-reducible to problem $B$ (denoted $A \le_{FG} B$), if for all $\varepsilon>0$ there exists a $\delta >0$, and an algorithm $F$ solving $A$ with oracle access to $B$ such that $F$ runs in at most $d\cdot a^{1-\delta}(n)$ time, making at most $k(n)$ oracle queries adaptively (i.e. the $j^{th}$ instance $B_{j}$ is a function of $\{B_i,a_i\}_{1\le i<j}$). The sizes $|B_{i}|=n_i$ for any choice of oracle answers $a_i$, obey the inequality:	$$ \sum_{i=1}^{k(n)}b^{1-\varepsilon}(n_i) \le d \cdot a^{1-\delta}(n)$$ 
\end{definition}

\section{Parameterized Fine-Grained Reductions}

In this section we define Parameterized Fine-Grained Reductions (PFGR), along with some examples of applications, and show their relation to fine-grained reductions.

\begin{definition}[PFGR]
Given problems $A$ and $B$ with $a(n)$, $b(n)$ their respective conjectured best running times: We say $\langle A,k_1,\dots, k_{i_A} \rangle \le_{PFG} \langle B,\lambda_1,\dots, \lambda_{i_B} \rangle$ if there exists and algorithm R such that
\begin{enumerate} 
\item For every $\varepsilon>0$ there exists a $\delta>0$ such that R runs in $a^{1-\delta}(n)$ time on inputs $I_A$ of length $n$ by making $q$ query calls to $\langle B \rangle$ with query lengths $n_1,\dots,n_q$, and $\sum_{i=1}^q b^{1-\varepsilon}(n_i) \le c\cdot a^{1-\delta}(n)$, for some constant $c>0$, and  R accepts iff $I_A \in \langle A \rangle$.

\item For every query $q_j ,j=1, \ldots, q$, there exists a computable function $g_j:\mathbb{N}^{i_A} \to \mathbb{N}^{i_B
}$ defined as $g_j(k_1, \ldots, k_{i_A})= [g_{j,1},g_{j,2},\ldots, g_{j,i_B}]$  such that  for every $ \lambda_i \in \langle \lambda_1,\dots, \lambda_{i_B} \rangle$\\$ \lambda_i \le g_{j,i}(k_1, \ldots, k_{i_A})$.
\end{enumerate}
\end{definition}

\begin{remark}
The number of calls is specific to the type of reduction used. In the case of adaptive queries, the number of potential calls could exceed $q$ exponentially (in the worst case).
A reasonable objection could arise here since the mappings $g_j$ are not assumed to have any time restriction.
However, $g_j$ are not implemented by the reduction algorithm, and are merely correlating functions of the parameters. As such, they do not affect the running time of the reduction.
\end{remark}

Additionally, one would suspect that this definition is a limitation on the original fine-grained framework and hence is only satisfied by some of the known reductions. The main problem is that most of the known reductions refer to non parameterized problems. This however can be easily surpassed by our formalization, as we can view these as projections of PFGR (see section \ref{equiv}): 

Given a problem $P$ and an input instance $I_P$, the parameterized version of the problem can be produced by extending the input with the computable function that defines each parameter over it. We can now redefine any reductions it took place in, simply replacing the problem with its parameterized version. This essentially provides us with all of the possible parameterizations a problem can have, and uses them as a whole in order to preserve structural characteristics. 

While analyzing reductions, some notable cases occur:
Firstly, the case of only one query call, as observed in the majority of known fine-grained reductions. Secondly, the case where even though many query calls are made, the constructions of the input instance to $\langle B\rangle$ maintain uniform mappings of the parameters, i.e. $g_j=g_{j'}, \forall j,j' \in [q]$. Lastly, the case where the value of each parameter of problem $\langle B\rangle$ is only related to a single parameter of $\langle A\rangle$, i.e. $g_{j,i}:\mathbb{N} \to \mathbb{N}$. \footnote{this case is especially useful in transferring parameterized improvements, as we will see in section \ref{closure}.} 

We provide some examples to further clarify our definition and notation:
\begin{example}
Consider the well-studied reduction $CNF\mbox{-}SAT \le_{2^n,n^2}k\mbox{-}OV$, presented in \cite{Wil05}. It is apparent that through one call, the number of clauses $m$ of the SAT instance corresponds to the dimension $d$ of the OV  vectors instance, as well as that the number of variables $n$ is mapped to the number of vectors $N$ via the mapping:
$g(n,m)=\langle 2^{n/k},m \rangle=\langle N,d\rangle=I_{OV}$.
This means that the input instance to OV will contain $N=2^{n/k}$ vectors of dimension $d=m$.
\end{example}
This procedure is summarized in the first row of the following table, as well as other indicative reductions, in the same context. For a more detailed analysis of each reduction see the full version.

\begin{table}[h!]
\[
\begin{array}{|l|l|c|c|}
\hline
\mbox{\textbf{Reduction}}   & \mbox{\textbf{Mapping}}                             & \mbox{\textbf{calls}} & \mbox{\textbf{Ref.}}        \\ \hline
I_{SAT} \le I_{OV} & g(n,m): N = 2^{n/k}, d = m & 1                & \cite{Wil05} \\ \hline
I_{APSP} \le I_{\text{MPProd}} & g(\text{Nodes}) : (n \times n , n \times n) & \lceil \log n \rceil & \cite{WW18} \\ \hline
I_{\text{MPProd}}\le I_{APSP} & g(n_1 \times n_2,n_2 \times  n_3): (\text{Nodes} = n_1+n_2+n_3) & 1 & \cite{WW18}\\ \hline
I_{\text{MPProd}}\le I_{\text{NegTr}} & g( n_1 \times n_2, n_2 \times n_3): (\text{Nodes} = n_1+n_2+n_3) & \log n &\cite{WW18}\\ \hline
I_{\text{NegTr}}\le I_{\text{Radius}} & g(\text{Nodes} = n,\text{Weights} = n^c): (4n,3n^c) & 1 & \cite{AGW15} \\  \hline
I_{APNT} \le I_{\text{NegTr}}  & g(\text{Nodes} = n) : (\text{Nodes'} = \sqrt[3]{n}) & n^2+\frac{n^3}{\sqrt[3]{n}} & \cite{WW18}\\ 
\hline 
\end{array}
\]
\caption{Some reductions using our notation (\textit{where APNT abbreviates the All-Pairs Negative Triangle problem, NegTr the Negative Triangle, and MPProd the Min-Plus product problem.})}\label{table1}
\end{table}

\paragraph*{Consistency with fine grained complexity}
\label{equiv}
While our framework encapsulates many natural structural properties, there are problems that are fine-grained reducible to each other and either do not have an obvious correlation between their structures, or have connections that are not apparent. We will show here that our definition for Parameterized Fine-Grained Reduction is consistent with those cases, as the set of problems that are reducible to each other via fine grained reductions (denoted as $S_1$) and the respective set for Parameterized Fine Grained Reductions (denoted as $S_2$) are equivalent.

\begin{theorem}
Let $S_1:=\{(A,B): A\le_{FG}B \}$ and $S_2:=\{(A,B): A\le_{PFG}B\}.$\\ 
Then $S_1=S_2$.
\end{theorem}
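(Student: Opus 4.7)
The plan is to prove the two set inclusions separately, as they have very different flavors.

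\paragraph*{Direction $S_2 \subseteq S_1$} This direction should be immediate from inspection of the definitions. Comparing Definition~\ref{FGR} with the PFGR definition, condition~(1) of a PFGR is syntactically identical to the fine-grained reducibility condition (with the constant $c$ playing the role of $d$, and the same running-time/query-size budget). Condition~(2) is an additional constraint about parameter bounds, which can only restrict the class of reductions further. Consequently, if $A \le_{PFG} B$ witnessed by an algorithm $R$ together with mappings $\{g_j\}$, then $R$ alone already witnesses $A \le_{FG} B$.

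\paragraph*{Direction $S_1 \subseteq S_2$} This is the substantive direction. The idea, as outlined in the paragraph preceding the theorem (``Consistency with fine grained complexity''), is to view any non-parameterized problem as a parameterized one by equipping it with parameterization functions, including as a baseline the trivial function $k_0(I_A) = |I_A|$. Given a fine-grained reduction $A \le_{FG} B$ realized by algorithm $F$, I would keep the same algorithm as $R$; condition~(1) of PFGR is then inherited verbatim with the same constants $c$ and $\delta$. To construct the mappings $g_j$ required by condition~(2), observe that each query instance $B_j$ is produced by $R$ from $I_A$ (and previously observed oracle answers) within total time at most $c \cdot a^{1-\delta}(n)$, so $|B_j|$ is bounded by a computable function of $n$. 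Every parameterization $\lambda_i : \Sigma^* \to \mathbb{N}$ of $\langle B \rangle$ is by definition a computable function, hence $\lambda_i(B_j)$ is bounded by a computable function of $|B_j|$, and thus of $n = k_0(I_A)$. Defining $g_{j,i}$ as this composed computable upper bound yields the required inequality $\lambda_i \le g_{j,i}(k_1,\dots,k_{i_A})$. The remark following the PFGR definition explicitly permits $g_j$ to be any computable function, because it is not executed inside the reduction, so no efficiency concerns arise.

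\paragraph*{Main obstacle} The only genuinely delicate point is the parameterization convention: for condition~(2) to be meaningful, the parameters of $\langle A \rangle$ must carry enough information about $I_A$ to upper-bound arbitrary parameters of $B_j$. Adopting the natural convention that $n = |I_A|$ is always available as one of the $k_i$ (consistent with the authors' ``projection'' viewpoint in the preceding discussion) resolves this cleanly; without such a convention, one could exhibit pathological parameterizations of $\langle A \rangle$ that forget too much about $I_A$ for any computable $g_{j,i}$ to exist. Once this convention is fixed, the remainder of the argument is essentially a bookkeeping exercise tying the parameter bounds to the reduction's time budget.
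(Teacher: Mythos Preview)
Your proposal is correct, and for $S_2 \subseteq S_1$ it coincides with the paper's argument verbatim. For $S_1 \subseteq S_2$ you take a genuinely different route. The paper does not bound the parameters of $B$ through the input length $n$ at all; instead it observes that for each parameter function $\lambda_i$ of $B$ and each query-producing map $F_j$ of the reduction, the composition $\lambda_i \circ F_j$ is itself a computable function of $I_A$ and hence a legitimate parameter of $\langle A \rangle$. Equipping $\langle A \rangle$ with exactly these pulled-back parameters makes the required $g_{j,i}$ the identity. Your approach instead keeps $\langle A \rangle$'s parameterization fixed (anchored by $k_0 = |I_A|$) and absorbs all the work into $g$: you bound $|B_j|$ by the reduction's time budget and then bound $\lambda_i(B_j)$ computably in $|B_j|$. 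The paper's route is slicker (trivial $g$, no size estimates needed) but requires tailoring $\langle A \rangle$'s parameter list to the specific reduction and glosses over adaptivity; yours uses a canonical parameterization of $A$ and handles adaptive queries cleanly via the uniform time bound, at the cost of a more involved $g$. One step you pass over deserves a sentence of justification: ``$\lambda_i(B_j)$ is bounded by a computable function of $|B_j|$'' is true, but only because for each $m$ there are finitely many strings of length at most $m$, so $m \mapsto \max_{|x|\le m}\lambda_i(x)$ is total computable; computability of $\lambda_i$ alone does not immediately give a bound in terms of input \emph{length}.
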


\begin{itemize}

\subparagraph*{}
    \item \textbf{{$S_2 \subseteq S_1$}}\\
Firstly, given problems $A$ and $B$ with $a(n)$, $b(n)$ their respective conjectured best running times, if $\langle A \rangle \le_{PFG} \langle B \rangle $, we can simply ignore the parameters involved in the reduction and treat it as a fine-grained reduction between A and B, as the time restrictions enforced in both definitions are identical ($a^{1-\delta}(n)$ bound for the reduction time, and $\sum_{i=1}^{q} b^{1-\varepsilon}(n_i)$ for the calls to problem $B$). 

\subparagraph*{}
\item \textbf{{$S_1 \subseteq S_2$}}
\begin{lemma}
Given problems $A$ and $B$ with $a(n)$, $b(n)$ their respective conjectured best running times, if $A\le_{FGR}B$, for every $\lambda_i$ in a given parameterization $\langle B \rangle $, for each query call $q_j$ made in the reduction, there exists a computable function $g_{j,i}:\mathbb{N}^{i_A}\longrightarrow\mathbb{N}$ such that $\lambda_i \le g_{j,i}(k_1,\ldots,k_{i_A})$, and as such $\langle A \rangle \le_{PFG} \langle B \rangle$.
\end{lemma}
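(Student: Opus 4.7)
The plan is to show that the very algorithm witnessing $A \le_{FG} B$ already witnesses $\langle A \rangle \le_{PFG} \langle B \rangle$, by observing that condition (1) of the PFGR definition coincides with the fine-grained time/query bound, and by constructing the parameter-bounding functions $g_{j,i}$ required by condition (2) directly from the time bound of the reduction.

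First, I would unfold the hypothesis. Let $R$ be an algorithm witnessing $A \le_{FG} B$. By Definition~\ref{FGR}, for every $\varepsilon>0$ there is a $\delta>0$ and a constant $d$ such that on input $I_A$ of length $n$, $R$ runs in time at most $d\cdot a^{1-\delta}(n)$ and produces queries $q_1,\dots,q_{k(n)}$ of sizes $n_j=|q_j|$ with $\sum_j b^{1-\varepsilon}(n_j)\le d\cdot a^{1-\delta}(n)$. This is exactly condition (1) of PFGR, so that clause requires no further work.

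Next, I would construct $g_{j,i}$ for condition~(2). Since every query $q_j$ is emitted by $R$ within time $d\cdot a^{1-\delta}(n)$, its description length satisfies $n_j \le d\cdot a^{1-\delta}(n)$. For any parameterization function $\lambda_i$ of $\langle B \rangle$ it holds, by the standing convention that parameters are bounded above by the instance size, that $\lambda_i(q_j)\le n_j$. Hence
\[
\lambda_i(q_j)\ \le\ d\cdot a^{1-\delta}(n).
\]
Using the flexibility of the generalized parameterization (the definition permits augmenting the list with any computable parameter), we may assume there is a computable function $h$ with $n=|I_A|\le h(k_1,\dots,k_{i_A})$; for instance, $|I_A|$ itself can always be treated as one of the parameters $k_1$. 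We then set
\[
g_{j,i}(k_1,\dots,k_{i_A}) \;:=\; d\cdot a^{1-\delta}\bigl(h(k_1,\dots,k_{i_A})\bigr),
\]
which is computable because $a$, $h$, and multiplication are. Combining, $\lambda_i(q_j)\le g_{j,i}(k_1,\dots,k_{i_A})$, which is condition~(2).

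The main (and essentially only) obstacle is the passage from ``$\lambda_i$ is bounded by a function of $n$'' to ``$\lambda_i$ is bounded by a function of the parameters of $A$''. For an arbitrary pre-fixed parameterization of $A$ this need not be automatic, since the supplied parameters might not dominate $n$. The generalized parameterization framework introduced earlier in the paper is what resolves this: by treating $|I_A|$ (or any computable upper bound thereof) as a legitimate parameter, the existence of $h$, and hence of $g_{j,i}$, is guaranteed. With $g_{j,i}$ in hand, $R$ itself witnesses $\langle A \rangle \le_{PFG} \langle B \rangle$, completing the direction $S_1\subseteq S_2$.
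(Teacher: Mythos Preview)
Your construction of $g_{j,i}$ rests on an unjustified step. You invoke a ``standing convention that parameters are bounded above by the instance size'' to deduce $\lambda_i(q_j)\le n_j$, but the paper makes no such assumption: parameterization functions are defined merely as computable maps $\Sigma^*\to\mathbb{N}$, so nothing prevents, say, $\lambda_i(I_B)=2^{|I_B|}$, in which case your inequality fails. The remainder of your argument (bounding $n_j$ by the running time of $R$, then expressing that in terms of $|I_A|$ viewed as a parameter of $A$) is sound, but it is built on this unsupported bound.

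The paper avoids the issue by a more direct route. Since each $\lambda_i$ is a computable function of $I_B$ and, for each query $j$, the reduction gives a computable map $F_j$ from $I_A$ to the $j$-th $B$-instance, the composition $f'_i:=\lambda_i\circ F_j$ is itself a computable function on $I_A$ and therefore a legitimate parameter of $\langle A\rangle$. Taking exactly these compositions as the parameters of $A$, one may set every $g_{j,i}$ to be the identity (a coordinate projection), so that $\lambda_i(q_j)=g_{j,i}(k_1,\dots,k_{i_A})$ holds with equality. Your proof is easily repaired along these lines: rather than detouring through a size bound on $\lambda_i$, compose directly and use the same ``augment the parameter list'' freedom you already invoke for $|I_A|$.
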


\begin{proof}
We remind the reader here that for $\lambda_i$ to be considered a parameter of a problem $A$, it has to be the output of a computable function on the input of $A$. Hence, every parameter of $B$ has a computable function $f_i(I_B)$ associated with it. Now, since the reduction producing the instances of $B$ is $a^{1-\delta}$-time computable, it can trivially be viewed as a computable function $F$ having as domain field the inputs to problem $A$, and range the input instances of $B$ it produces. Having these, we can simply take the composition of $F$ and each $f_i$ to produce computable functions $f'_i=F\circ f_i$ that produce the aforementioned parameters of $B$. Hence, these parameters can be viewed both as parameters of $B$ and parameters of $A$. For these reasons, the fine-grained reduction can be viewed as $\langle A,k_1,\dots\rangle \le_{PFG} \langle B,\lambda_1,\dots \rangle$ for $k_i=\lambda_i$ (ergo having the identity function as $g_{j,i}$).
\end{proof}

\end{itemize}

\section{Fixed Parameter Improvable Problems (FPI)}

In this section we define a class of problems that admit parameterized  improvements on their conjectured best running times,  prove that this class is closed under PFGR, as well as produce new parameterized improvements as an application of this closure.

 \begin{definition}[FPI]
 Let $A$ be a problem with conjectured best running time $a(n)$.
 Then, $\langle A \rangle $ has the FPI property with respect to a set of parameters $K=(k_1,k_2, \ldots, k_x) \subseteq \langle A \rangle$ \footnote{in this context, $K\subseteq \langle A \rangle$ denotes a set of parameterization functions over the input of $A$.} (denoted
 FPI$(A,K)$) if there exists an algorithm solving $\langle A \rangle $ in $O\left( a^{1-\varepsilon}(n)\cdot f(k_1,k_2, \ldots, k_x)\right)$ time, for some $\varepsilon>0$ and a computable function $f$. 
 \end{definition}
 
 \begin{remark}
 For simplicity, in the case of a single parameter, we denote as $FPI(A,k)$ the property $FPI(A,\{k\})$.
 \end{remark}
 
 \begin{theorem}\label{NPFPI}
For every NP-hard problem $A$ that admits an FPT algorithm w.r.t.\  a parameter $k$, we have that $FPI(A,k)$, unless $P=NP$.
 \end{theorem}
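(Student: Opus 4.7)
The plan is to use the super-polynomial lower bound on the conjectured best running time of any NP-hard problem (under $P\ne NP$) to absorb the polynomial dependence on $n$ of an FPT algorithm into a fractional power $a^{1-\varepsilon}(n)$.

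First, I would assume $P\ne NP$. Since $A$ is NP-hard, this implies $A\notin P$, so any correct algorithm deciding $A$ requires super-polynomial time in the input length on worst-case instances. In particular the conjectured best running time $a(n)$ is super-polynomial, i.e. $\log a(n)/\log n \to \infty$ as $n\to\infty$. (Otherwise a polynomial bound on $a(n)$ would place $A$ in $P$, contradicting the assumption.)

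Second, I would invoke the FPT hypothesis to obtain an algorithm $\mathcal{A}_{\mathrm{FPT}}$ deciding $\langle A,k\rangle$ in time $f(k)\cdot n^{c}$ for some computable function $f$ and constant $c\ge 1$. The key quantitative step is then to argue that for any fixed $\varepsilon\in(0,1)$ and all sufficiently large $n$,
\[
 n^{c}\ \le\ a^{1-\varepsilon}(n),
\]
which, taking logarithms, reduces to $c\log n\le (1-\varepsilon)\log a(n)$, an immediate consequence of the super-polynomial growth of $a(n)$. The finitely many small values of $n$ violating the bound can be handled by a brute-force lookup table at constant cost, which we absorb into a modified computable slack $f'(k):=f(k)+C$.

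Third, combining these observations, $\mathcal{A}_{\mathrm{FPT}}$ runs in time $O\bigl(f'(k)\cdot a^{1-\varepsilon}(n)\bigr)$, matching the FPI definition verbatim with improvement exponent $\varepsilon>0$ and computable function $f'$, hence $FPI(A,k)$. The main obstacle I expect is purely in handling the quantifier order cleanly: the FPI definition asks for the existence of a single $\varepsilon$, which is automatic here since super-polynomial growth of $a(n)$ gives the required inequality for \emph{every} $\varepsilon\in(0,1)$ (e.g. $\varepsilon=1/2$). A secondary subtlety worth stating explicitly is that the argument crucially uses the \emph{worst-case} NP-hardness of $A$: it is what forces $a(n)$ to be super-polynomial, and without this the polynomial $n^c$ slowdown of the FPT algorithm could not be absorbed into a power of $a(n)$ strictly smaller than $1$.
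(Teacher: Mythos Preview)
Your proposal is correct and follows essentially the same idea as the paper: both argue that because $a(n)$ grows faster than any polynomial, the $n^{c}$ factor from the FPT algorithm is dominated by $a^{1-\varepsilon}(n)$, yielding the FPI bound directly. Your version is in fact more careful than the paper's one-line proof, which simply asserts that NP-hard problems ``are conjectured to demand exponential running time''; you correctly note that $P\ne NP$ only guarantees super-polynomial growth of $a(n)$, and you handle the finitely many small $n$ explicitly, but the underlying argument is the same.
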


\begin{proof}
Since all NP-hard problems are conjectured to demand exponential running time, any FPT algorithm that solves them in $O\left( n^c\cdot f(k)\right)$ time for some parameter $k$ can be viewed as an improvement to $O(a^{1-\varepsilon}(n)\cdot f(k))$; the actual improvement in the conjectured running time $a(n)$ is in fact much greater than $a^{\varepsilon}(n)$ .  
\end{proof}

\begin{corollary}
The following problems, parameterized with the respective parameter are FPI:
\begin{itemize}
    \item $\langle$ Vertex Cover, Solution size $\rangle$
    \item $\langle$ SAT, Number of clauses $\rangle$
    \item $\langle$ k-knapsack,k $\rangle$
\end{itemize}
\end{corollary}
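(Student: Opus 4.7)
The plan is to verify, for each of the three parameterized problems listed, the two hypotheses of \autoref{NPFPI}: namely, NP-hardness of the underlying decision problem and existence of an FPT algorithm with respect to the indicated parameter. Once both ingredients are in place for a pair $(A,k)$, the conclusion $FPI(A,k)$ follows immediately, so no further machinery beyond assembling classical results is required. NP-hardness is immediate in all three cases: Vertex Cover, SAT, and Knapsack all appear in Karp's original list of NP-complete problems, so this part of the argument is just a citation.

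The substantive work lies in exhibiting the FPT algorithms. For $\langle \text{Vertex Cover}, k\rangle$, the textbook bounded-search-tree procedure runs in $O(2^{k}(n+m))$ time, with more refined analyses giving $O(1.2738^{k}+kn)$; both are of the form $f(k)\cdot \mathrm{poly}(n)$. For $\langle k\text{-Knapsack},k\rangle$, standard dynamic programming keyed on the relevant parameter (e.g.\ the number of items in the solution, or the number of distinct item sizes, depending on the variant under consideration) yields an $f(k)\cdot \mathrm{poly}(n)$ procedure. For $\langle \text{SAT}, m\rangle$ with $m$ the number of clauses, I would first apply a kernelization step (autarky reduction and pure-literal elimination) which bounds the number of essential variables by a function of $m$, and then brute-force assignments on the kernel; alternatively, one may simply invoke Hirsch's $O(1.234^{m})$-time algorithm, which directly has the required shape.

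With both ingredients in hand for each line of the list, \autoref{NPFPI} certifies the FPI property and the corollary is proved. The only case worth spelling out in a full write-up is SAT, since the parameter is the number of clauses rather than the more customary number of variables, and so one must justify that the variable count can be bounded in terms of $m$ after preprocessing before the brute-force enumeration becomes $f(m)$-time. The Vertex Cover and Knapsack entries, by contrast, are essentially organizational, reducing to a direct reference to standard parameterized-algorithms results combined with Karp-style NP-hardness.
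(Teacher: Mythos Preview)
Your proposal is correct and follows exactly the approach implicit in the paper: the corollary is stated there without proof, as an immediate consequence of \autoref{NPFPI}, so the only work is to observe that each of the three problems is NP-hard and admits an FPT algorithm in the indicated parameter. You in fact supply more detail than the paper does (explicit FPT algorithms and the kernelization argument for SAT with clause parameter), which is entirely appropriate for a write-up.
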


\begin{definition}[Minimum Necessary Set]
Let $\langle A,k_1,\dots, k_{i_A} \rangle$,  $\langle B,\lambda_1,\dots, \lambda_{i_B} \rangle$ be parameterized problems such that $\langle A,k_1,\dots, k_{i_A} \rangle \le_{PFG} \langle B,\lambda_1,\dots, \lambda_{i_B} \rangle$.
We define as $K_\Lambda$ to be the minimum necessary set needed to bound the set of parameters $\Lambda$ of $\langle B \rangle$ with respect to g, ergo the parameter set ${k_1,\ldots,k_x}$ of problem A for which $\exists \lambda_i \in  \Lambda, \forall j \in[q]$ such that $g_{j,i}(k_1,\dots, k_{A}) \le h(k_1,\ldots,k_x),~ k_i \in K_\Lambda$ for some computable function $h$.
\end{definition}

\begin{theorem}[Closure under PFGR]\label{closure}
Let $\langle A,k_1,\dots, k_{i_A} \rangle$,  $\langle B,\lambda_1,\dots, \lambda_{i_B} \rangle$ be parameterized problems. If $\langle A,k_1,\dots, k_{i_A} \rangle \le_{PFG} \langle B,\lambda_1,\dots, \lambda_{i_B}\rangle$ and $FPI(B,\Lambda)$, then $FPI(A,K_\Lambda)$, where $K_\Lambda$ is the minimum necessary set of $\Lambda$ w.r.t. $g$.
\end{theorem}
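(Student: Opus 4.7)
The plan is to simulate the reduction algorithm $R$ from $\langle A\rangle$ to $\langle B\rangle$ and answer each of its oracle queries using the assumed FPI algorithm for $B$. Concretely, let $M_B$ be an algorithm solving $\langle B,\lambda_1,\dots,\lambda_{i_B}\rangle$ in time $b^{1-\varepsilon'}(m)\cdot f(\lambda_1,\dots,\lambda_{i_B})$ for some $\varepsilon'>0$ and a computable $f$, and let $R$ witness the PFGR from $A$ to $B$. Apply the PFGR definition with this specific $\varepsilon=\varepsilon'$ to obtain the corresponding $\delta>0$ for which $R$ runs in $a^{1-\delta}(n)$ time and produces queries of sizes $n_1,\dots,n_q$ with $\sum_{j=1}^q b^{1-\varepsilon'}(n_j)\le c\cdot a^{1-\delta}(n)$. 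The combined algorithm runs $R$ and, whenever $R$ requests the $j$-th oracle answer, invokes $M_B$ on the produced instance of $\langle B\rangle$.

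Next I would bound the cost of the $B$-calls uniformly in terms of $K_\Lambda$ only. For each query $j$, the relevant parameters of the produced instance satisfy $\lambda_i\le g_{j,i}(k_1,\dots,k_{i_A})$ by item~(2) of the PFGR definition, and by the definition of the minimum necessary set we have $g_{j,i}(k_1,\dots,k_{i_A})\le h(k_1,\dots,k_x)$ for the $k_i\in K_\Lambda$ and some computable $h$. Since the parameters $\lambda_i\notin\Lambda$ do not appear in $f$, and each $\lambda_i\in\Lambda$ is bounded by $h(K_\Lambda)$, we can define the computable envelope
\[
\tilde f(K_\Lambda)\;=\;\max_{\lambda_1,\dots,\lambda_{|\Lambda|}\,\le\, h(K_\Lambda)} f(\lambda_1,\dots,\lambda_{|\Lambda|}),
\]
which is well-defined because the maximum is over a finite set. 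Then for every query $j$ we have the uniform bound $f(\lambda_1,\dots,\lambda_{|\Lambda|})\le\tilde f(K_\Lambda)$.

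Summing up the running time, we obtain
\[
T(n)\;\le\; a^{1-\delta}(n)\;+\;\sum_{j=1}^{q} b^{1-\varepsilon'}(n_j)\cdot \tilde f(K_\Lambda)\;\le\; a^{1-\delta}(n)\;+\;\tilde f(K_\Lambda)\cdot c\cdot a^{1-\delta}(n),
\]
which is $O\bigl(a^{1-\delta}(n)\cdot \tilde f(K_\Lambda)\bigr)$. This is exactly the form required by the definition of $FPI(A,K_\Lambda)$, so the combined algorithm witnesses the desired FPI property for $A$.

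The step I expect to be the main obstacle is handling parameters of $\langle B\rangle$ that do not belong to $\Lambda$: one must argue that $FPI(B,\Lambda)$ only requires a bound on the parameters in $\Lambda$, and that the PFGR framework indeed provides such a bound through $g_{j,i}$ for precisely those coordinates, with $K_\Lambda$ chosen minimal to cover them uniformly across all queries $j\in[q]$. A secondary subtlety is the quantifier choice: one must pick $\varepsilon$ in the PFGR definition \emph{after} fixing the $\varepsilon'$ promised by $FPI(B,\Lambda)$, so that the call-size inequality is tailored to the exponent actually used by $M_B$; otherwise the per-query estimate $b^{1-\varepsilon'}(n_j)$ would not telescope into $O(a^{1-\delta}(n))$.
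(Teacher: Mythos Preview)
Your proposal is correct and follows essentially the same route as the paper: run the PFGR algorithm $R$, replace each oracle call by the FPI algorithm for $B$, bound each $f(\lambda_1,\dots,\lambda_{|\Lambda|})$ via $h(K_\Lambda)$, and then use the query-size inequality $\sum_j b^{1-\varepsilon'}(n_j)\le c\cdot a^{1-\delta}(n)$ to collapse the sum. If anything, you are more careful than the paper on two points it leaves implicit: you explicitly instantiate the PFGR $\varepsilon$ with the $\varepsilon'$ coming from $FPI(B,\Lambda)$, and you construct the computable envelope $\tilde f$ by maximizing over the finite box $[0,h(K_\Lambda)]^{|\Lambda|}$ rather than silently assuming monotonicity of $f$.
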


\begin{proof}
It suffices to prove that there exists an algorithm for $A$ running in $O(a^{1-\varepsilon}(n)\cdot f(k_1,\ldots,k_x))$ for $k_i,~i\in [x]$, parameters of $\langle A \rangle$.

Since $\langle A \rangle \le_{PFG} \langle B \rangle$, then for all $\varepsilon>0$ there exists a $\delta>0$ such that $I_A$ is evaluated by an $a^{1-\delta}(n)$ algorithm using $B$ as an oracle, and $\sum_{i=1}^{q} b^{1-\varepsilon}(n_i) \le a^{1-\delta}(n)$. 

Also, $FPI(B,\Lambda)$, so there is an algorithm computing $B$ in $O(b^{1-\varepsilon'}(n)\cdot f(\lambda_1,\ldots,\lambda_{|\Lambda|}))$. We can use this algorithm to resolve the oracle calls in time $\sum_{i=1}^{q} b^{1-\varepsilon'}(n_i)f(\lambda_1,\ldots,\lambda_{|\Lambda|})$.

We can use $K_\Lambda$ to describe the running time for B utilizing function h. As such, the total running time of $A$ is:

$$a^{1-\delta}(n)+\sum_{i=1}^{q} b^{1-\varepsilon'}(n)f(\lambda_1,\ldots,\lambda_{|\Lambda|})
= a^{1-\delta}(n)+\sum_{i=1}^{q} b^{1-\varepsilon'}(n)f(h(k_1,\ldots,k_{|K_\Lambda|})) 
$$

$$\le a^{1-\delta}(n) \cdot f'(k_1,\ldots,k_{|K_\Lambda|})
$$
Hence, $FPI(A,K_\Lambda)$.
\end{proof}

The above result essentially means that any parametric improvement can be carried through a valid PFGR. 

\subsection{A subquadratic fixed-parameter algorithm for OV}

We will now provide an analysis of a known reduction from $OV$ to $3/2$-approx-diameter \cite{RW13} using the PFGR framework. Specifically, we show that since $3/2$-approx-diameter admits fixed parameter improvements \cite{AWW15} on the treewidth parameter (hence is in FPI), this can be used to provide a fixed-parameter improvement on the OV problem.

\begin{theorem}\label{ovtodiam}
$\langle OV,dimension\rangle$ is PFG-reducible to $\langle 3/2$-approx-diameter$,treewidth\rangle $.
\end{theorem}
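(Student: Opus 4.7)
The plan is to instantiate the classical Roditty--Williams reduction from OV to $3/2$-approx-diameter and verify that it fits the PFGR template, with the OV dimension mapping directly to the treewidth of the constructed graph.

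First, I would recall the construction: given an OV instance $A,B\subseteq\{0,1\}^d$ with $|A|=|B|=n$, build a graph $G$ on vertex set $A\cup B\cup C\cup\{u,v\}$ where $C=\{c_1,\ldots,c_d\}$ is a set of coordinate vertices. Add the edge $uv$, connect $u$ to every $a\in A$, connect $v$ to every $b\in B$, and for each $a\in A$ (resp.\ $b\in B$) connect $a$ (resp.\ $b$) to $c_i$ whenever $a[i]=1$ (resp.\ $b[i]=1$). Correctness follows from the usual case analysis: if $a\cdot b=0$ for some $a\in A,\,b\in B$, no coordinate vertex is adjacent to both, so $d_G(a,b)\geq 3$ and hence $\mathrm{diam}(G)\geq 3$; otherwise every pair of vertices sits at distance at most $2$ through $u$, $v$, the edge $uv$, or a shared coordinate. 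A $3/2$-approximation therefore distinguishes $\mathrm{diam}(G)=2$ from $\mathrm{diam}(G)\geq 3$, so a single oracle query suffices to decide OV.

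Second, I would check the time bounds imposed by the PFGR definition. The reduction performs one query on a graph of size $N=2n+d+\Theta(1)$ built in $O(nd)$ time, with conjectured best running times $a(n)=n^2$ for OV and $b(N)=N^2$ for $3/2$-approx-diameter. In the relevant regime $d\leq n^{1-\eta}$, the inequality $c\cdot N^{2(1-\varepsilon)}\leq n^{2(1-\delta)}$ holds for a suitable $\delta(\varepsilon)>0$, and this is precisely the fine-grained inequality already verified by Roditty--Williams.

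Finally, the only genuinely new ingredient: I would bound the treewidth of $G$ by $d+2$. The set $S=C\cup\{u,v\}$ separates the graph since its removal leaves $A\cup B$ as an independent set, so a valid tree decomposition is obtained by taking a root bag equal to $S$ and, for each $w\in A\cup B$, a leaf bag $S\cup\{w\}$ attached to the root; coverage, connectivity, and edge-inclusion conditions are immediate, and the largest bag has size $d+3$, yielding $tw(G)\leq d+2$. Choosing the parameter-mapping function $g(d)=d+2$ then completes the PFGR. The main (and essentially only) obstacle is noticing that the $d$ coordinate vertices together with the two hubs already form a separator, after which the decomposition and the parameter bound write themselves.
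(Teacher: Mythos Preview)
Your overall strategy matches the paper's exactly---instantiate the Roditty--Williams reduction, check the fine-grained time bound, and exhibit a tree decomposition whose bag size is $d+O(1)$---but the construction you wrote down is missing edges, and this breaks the correctness of the reduction.

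In the actual Roditty--Williams graph (and in the paper's proof), the two hub vertices are also joined to \emph{every} coordinate vertex: one adds $(x,c_i)$ and $(y,c_i)$ for all $i\in[d]$. Without these edges your claim ``otherwise every pair of vertices sits at distance at most $2$'' is false. For instance, take $d=2$, $A=\{(1,0)\}$, $B=\{(1,1)\}$: there is no orthogonal pair, yet in your graph $c_2$ has no neighbour in $A$ and $u$ has no neighbour in $C$, so $d_G(u,c_2)\geq 3$. Similarly $d_G(c_i,c_j)$ can exceed $2$ when no single vector has a $1$ in both coordinates. Restoring the hub--coordinate edges fixes all of this, and your treewidth argument survives the fix unchanged since those edges lie inside your separator $S=C\cup\{u,v\}$.

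Two minor differences from the paper that do not affect the theorem: (i) the paper gives a slightly sharper decomposition---bags $\{a_i,x\}\cup C$, $\{b_j,y\}\cup C$, and $\{x,y\}\cup C$---yielding width $d+1$ rather than your $d+2$; and (ii) the paper also argues a matching lower bound on the treewidth via a $K_{n,d}$ minor. Neither is needed for the PFGR statement, which only asks for a computable upper bound $g(d)$ on the treewidth; your $g(d)=d+2$ would suffice once the construction is corrected.
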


\begin{proof}

Note that this reduction is implemented using only one call, and we analyze only one parameter. As such, we will simplify the notation of $g_{j,i}$ to $g$. 

We begin with the construction given in \cite{RW13}: 
Given an $OV$ instance with sets $A$, $B$ as input, we create a graph as follows: for every $a \in A$ create a node $a \in G$, for every $b\in B$ create a node $b\in G$, and for every $i \in [d]$ create a node $c_i\in G$, as well as two nodes $x,y$. For every $a\in A$ and $i\in [d]$, if $a[i]=1$ we add the edge $(a,c_i)$. Similarly, for every $b\in B$ and $i\in [d]$, if $b[i]=1$, we add the edge $(b,c_i)$. Also, we add the edges $(x,a)$ for every $a\in A$, $(x,c_i)$ for every $i\in[d]$, $(y,b)$ for every $b\in B$, $(y,c_i)$ for every $i\in [d]$, and $(x,y)$. 

It suffices to find which parameter is connected to treewidth via the reduction:
As seen in \ref{figureTW}-(a), the graph produced by the reduction has a very specific structure. That is, all nodes of group $A$ are linked exclusively with nodes of group $C$ and with node $x$. Similarly, for group $B$ we have connections to group $C$ and node $y$. Therefore, to produce a tree decomposition of $G$ we can leave the nodes of group $A$ unrelated with those of group $B$. 

Now, the specific connections of nodes from the groups $A$, $B$ to the nodes of group $C$ can vary, depending on the form of the OV instance. We can however give an upper bound to the treewidth of $G$ as shown in \ref{figureTW}-(b) by copying the whole group $C$ in all of the decomposition's bags. One can check that each component induced by a label is connected, and that all edges of $G$ are covered by the given bags. This decomposition is of maximum bag size $d+2$ and hence of width $d+1$, where $d$ is the size of group $C$. 

As follows from the definition, $d+1$ is an upper bound for the treewidth of $G$. However,  another tree decomposition with smaller width could exist. In order to prove that the  treewidth of $G$ is exactly $d+1$, we must show that there exist instances of OV that produce graphs through this reduction, corresponding to treewidth $d+1$. Depending on the vectors containing a 1 coordinate in the suitable position, we could end up with a graph containing as a minor a complete $K\{|A|=|B|,|C|\}$ bipartite graph. Since complete bipartite graphs $K_{m,n}$ have treewidth exactly $\min \{m,n\}$, we can deduce that all graphs $G$ produced by this reduction will have treewidth in the worst case $d+1$ (as seen in chapter 10 of \cite{DF13}).

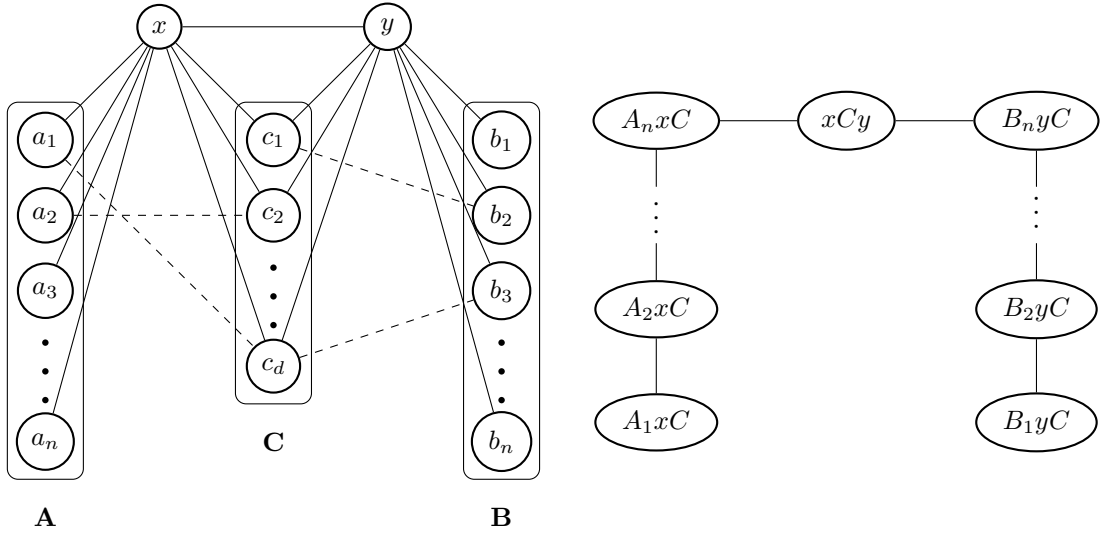
\begin{figure}[t]
\begin{subfigure}[t]{0.4\textwidth}
\centering
\begin{tikzpicture}
\begin{scope}[every node/.style={circle,thick,draw}]
    \node (L1) at (0,0) [draw=none] {\textbf{A}};
    \node (L1) at (3,1) [draw=none] {\textbf{C}};
    \node (L1) at (6,0) [draw=none] {\textbf{B}};	
    \node (A) at (0,1) {$a_n$};
    \node (B) at (0,3) {$a_3$};
    \node (C) at (0,4) {$a_2$};
    \node (D) at (0,5) {$a_1$};

    \node (E) at (3,2) {$c_d$};
    \node (F) at (3,4) {$c_2$};
    \node (H) at (3,5) {$c_1$};
    
    \node (I) at (6,1) {$b_n$};
    \node (G) at (6,3) {$b_3$};
    \node (K) at (6,4) {$b_2$};
    \node (L) at (6,5) {$b_1$};
    
    \node (X) at (1.5,6.5) {$x$};
    \node (Y) at (4.5,6.5) {$y$};
\end{scope}

\path (A) -- (B) node [font=\Huge, midway, sloped] {$\dots$};
\path (I) -- (G) node [font=\Huge, midway, sloped] {$\dots$};
\path (E) -- (F) node [font=\Huge, midway, sloped] {$\dots$};

\path (X) edge (Y);

\path (X) edge (A);
\path (X) edge (B);
\path (X) edge (C);
\path (X) edge (D);

\path (Y) edge (I);
\path (Y) edge (G);
\path (Y) edge (K);
\path (Y) edge (L);

\path (X) edge (E);
\path (X) edge (F);
\path (X) edge (H);

\path (Y) edge (E);
\path (Y) edge (F);
\path (Y) edge (H);

\begin{scope}   [dashed]
\path (C) edge (F);
\path (D) edge (E);
\path (H) edge (K);
\path (E) edge (G);

\end{scope}
\draw [rounded corners] (-0.5,0.5) rectangle (0.5,5.5);
\draw [rounded corners] (2.5,1.5) rectangle (3.5,5.5);
\draw [rounded corners] (5.5,0.5) rectangle (6.5,5.5);
\end{tikzpicture}
\caption{Graph produced by the reduction}
\end{subfigure}\hfill
\begin{subfigure}[t]{0.45\textwidth}
\centering
\begin{tikzpicture}
\begin{scope}[every node/.style={ellipse,thick,draw}]
    \node (A) at (0,1.5) {$A_1 x C$};
    \node (B) at (0,3) {$A_2 x C$};
    \node (C) at (0,5.5) {$A_{n} x C$};
    \node (D) at (5,1.5) {$B_1 y C$};
    \node (E) at (5,3) {$B_{2} y C$};
    \node (F) at (5,5.5) {$B_{n} y C$};
    \node (G) at (2.5,5.5) {$x C y$};

    \end{scope}    
\node (L1) at (0,4) [draw=none] {};
\node (L2) at (0,4.5) [draw=none] {};
\node (L3) at (5,4.5) [draw=none] {};
\node (L4) at (5,4) [draw=none] {};

\node (N1) at (0,0) [draw=none] {};

\path (B) -- (C) node [font=\Large, midway, sloped] {$\dots$};    
\path (F) -- (E) node [font=\Large, midway, sloped] {$\dots$};    
\path (A) edge (B);
\path (D) edge (E);
\path (C) edge (G);
\path (F) edge (G);
\path (B) edge (L1);
\path (C) edge (L2);
\path (F) edge (L3);
\path (E) edge (L4);
\end{tikzpicture}
\caption{Tree Decomposition}
\end{subfigure}
\caption{PFG reduction from $OV$ to $3/2$-approx-diameter.}\label{figureTW}
\end{figure}

Since $d$ is exactly the dimension of the OV instance producing the graph, we can see that there exists a function $g$ that maps the dimension of the OV instance to the treewidth of the $3/2$- approx-dimension. 

Therefore, the mapping is $g(\langle OV,d\rangle) = \langle 3/2$-approx-diameter $,treewidth-1\rangle$
\end{proof}

It was shown in \cite{AWW15} that $3/2$-approx-diameter parameterized by treewidth has a subquadratic algorithm.
Hence, FPI($3/2$-approx-diameter,treewidth). Now, as follows by theorem \ref{closure} we should expect that FPI(OV,k) where $g(k)=treewidth$. Equivalently, we would expect a parametric improvement to the running time of OV for the instances that are related to the ones of the 3/2-approx-Diameter problem of bounded treewidth.

By theorem \ref{ovtodiam} we have that since $3/2$-approx-diameter has a parameterized improvement for fixed treewidth, OV has a subquadratic algorithm for fixed dimension of the vectors. 

We will construct a subquadratic fixed-parameter algorithm, via the process described above.

Specifically, for the reduction time:
The graph constructed contained $2n+d+2$ nodes and $2nd+2d+2n$ edges, which can be constructed in $O(nd)$ time from the OV instance. 

The resulting graph has $O(n+d)$ nodes, and the decision problem of $3/2$-approx-diameter for this graph also gives an answer for the decision problem of the OV instance.

The parameterized complexity of the algorithm solving diameter is $k^2 n \log^{k-1}n$, where $k$ denotes the treewidth of the graph \cite{AWW15}.

Ergo, since $k=d+1$ (via our reduction) we can use the above to obtain an algorithm for OV running in time:
$$(d+1)^2(n+d)\log^{d+1-1}(n+d)=O\left( d^2(n+d)\log^d(n+d)\right) $$

\begin{remark}
Through our reduction, this result can be carried out to all problems PFG-reducible to OV, such as SAT or Hitting Set. See the full version for the respective analyses of these reductions.
\end{remark}

\begin{remark}
As we have seen, if problem $B$ admits parameterized improvements on parameters $\Lambda$, then through the reduction this can be translated to improvements on problem $A$ and parameters $K_\Lambda$ such that $\lambda_i \le g(k_1,...,k_{|K_\Lambda|})$, for $i\in [|\Lambda|]$. However, whether we can locate which parameters constitute $K_\Lambda$ or not depends on the invertibility of $g$. In the case $g$ is not invertible one can only show the existence of such an algorithm, but not necessarily construct it. Nevertheless, the FPI property still holds through our definition, because we can abuse the notation to interpret each $\lambda_i$ as a parameter of A, as it is a byproduct of the reduction which is \emph{an ($a(n)$-time) computable function on the input of $A$}.
\end{remark}

\section{Circuit Characterization of FPI}

We provide a characterization for FPI using circuit complexity. Specifically, it is known that any circuit of size $S(n)$ can be simulated by an algorithm with complexity $O(S(n))$, thus if one can design a circuit with size smaller than the conjectured complexity of the problem, then this can be translated into a faster algorithm.

As such, having a circuit of size $S(n)$, if we can fix any number of parameters $x$ such that the circuit can be seen as having $S'(n)\leq a(n)^{1-\varepsilon}f(k_1,\ldots,k_x)$ size, we can use this circuit to produce a truly sub-$a(n)$ algorithm for our problem.

Nevertheless, the smallest circuit solving the problem may differ from the one produced via a simulation of an algorithm\footnote{which is the only universal way to produce a circuit from an arbitrary algorithm.}. This means that an improvement in the size complexity of the circuit may not be enough to be translated into a more effective algorithm via an inverse simulation.
In that case, for the improvement in the size of the circuit to be translated to a faster algorithm, it is necessary to exceed this difference.
From now on, when referring to a circuit solving a problem, the reader should consider the one produced by the simulation procedure.

As shown in \cite{Fur82}, we can simulate any algorithm running in time $a(n)$ by a circuit of size $S(n) = a(n)log(a(n))$. 
Thus, we can use this as an upper bound on the overall size complexity of the circuit produced, to show that an improvement in the size of the circuit $S^{1-\varepsilon}(n)f(k_1,\ldots,k_x)$ (for some $x$) is always sufficient.

\begin{theorem}
Let $A$ be a problem with $a(n)$ conjectured best running time. Then, $FPI(A,K)$ if and only if for the uniform circuit family $\{C_n\}$ of size $S(n)$ computing $A$, for each $n\in\mathbb{N}$, $C_n$ has size $S'(n)=S^{1-\varepsilon}(n)\cdot f(k_1,\ldots,k_{|K|})$, for a computable function $f$.
\end{theorem}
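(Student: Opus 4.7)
The plan is to establish the biconditional using the two simulation facts invoked in the paragraphs preceding the theorem: any uniform circuit family of size $T(n)$ yields an $O(T(n))$-time algorithm, and conversely (by Furer \cite{Fur82}) a $T(n)$-time algorithm is simulated by a uniform circuit of size $O(T(n)\log T(n))$. Writing $S(n)=a(n)\log a(n)$ for the baseline circuit size obtained by simulating the conjectured best algorithm, both directions reduce to elementary arithmetic on the size expressions, modulo absorbing polylogarithmic slack into the $\varepsilon$-exponent in the spirit of the fine-grained convention that polylogarithmic improvements do not count.

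For the forward direction ($\Rightarrow$), assume $FPI(A,K)$ and fix an algorithm deciding $A$ in time $T(n) = a^{1-\varepsilon}(n)\cdot f(k_1,\ldots,k_{|K|})$ for some $\varepsilon>0$ and computable $f$. Applying the Furer simulation produces a uniform family of size at most $T(n)\log T(n) = a^{1-\varepsilon}(n)\, f(k)\, \bigl(\log a^{1-\varepsilon}(n) + \log f(k)\bigr)$. The factor $\log a^{1-\varepsilon}(n)$ can be absorbed into a slightly smaller exponent $\varepsilon'<\varepsilon$ on $a(n)$ using $\log^{c} a(n)=o(a(n)^{\delta})$ for every $c$ and $\delta>0$, while $\log f(k)$ is folded into a new computable function $f'$ of $k$ alone. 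Since $S(n)^{1-\varepsilon'} = a(n)^{1-\varepsilon'}\log^{1-\varepsilon'} a(n)$, the resulting bound has exactly the target shape $S^{1-\varepsilon'}(n)\cdot f'(k_1,\ldots,k_{|K|})$.

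For the backward direction ($\Leftarrow$), assume the uniform family $\{C_n\}$ has size $S'(n)=S^{1-\varepsilon}(n)\,f(k_1,\ldots,k_{|K|})$. Simulating the family through the uniformity generator yields an algorithm for $A$ running in time $O(S'(n)) = O\bigl(a^{1-\varepsilon}(n)\,\log^{1-\varepsilon} a(n)\cdot f(k)\bigr)$. Choosing any $\varepsilon'\in(0,\varepsilon)$, the inequality $\log^{1-\varepsilon}a(n)\le a(n)^{\varepsilon-\varepsilon'}$ holds for all sufficiently large $n$, so (after absorbing finitely many small-$n$ cases into the hidden constant) the running time is bounded by $a^{1-\varepsilon'}(n)\cdot f(k_1,\ldots,k_{|K|})$, witnessing $FPI(A,K)$.

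The main delicate point is not either simulation step, but the consistent bookkeeping of the $\varepsilon$-absorption on both sides. One has to be explicit that the FPI definition is existentially quantified over $\varepsilon>0$, so replacing $\varepsilon$ by a smaller $\varepsilon'$ to soak up a $\log^{c}a(n)$ factor is legitimate; symmetrically, one has to check that ``the uniform circuit family of size $S^{1-\varepsilon}(n)f(k)$'' is understood with the same existential flexibility in $\varepsilon$. As stressed in the paragraph immediately preceding the theorem, the base $S(n)=a(n)\log a(n)$ is precisely what makes this absorption harmless: the polylogarithmic overhead introduced by the simulation already sits inside $S(n)$, so the two directions line up without any loss outside of what the fine-grained framework already permits.
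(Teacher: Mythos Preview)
Your proposal is correct and follows essentially the same approach as the paper: both directions use the two simulations (Furer for algorithm $\to$ circuit, linear-time evaluation for circuit $\to$ algorithm) together with the absorption of polylogarithmic factors into a slightly smaller $\varepsilon'$. The only cosmetic difference is that the paper's ``$\Rightarrow$'' and ``$\Leftarrow$'' labels are swapped relative to the statement; your labeling is the consistent one, and your bookkeeping of the $\varepsilon$-absorption is a bit more explicit, but the underlying argument is identical.
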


\begin{proof} 

``$\Rightarrow$'':

\begin{eqnarray*}
    S'(n) &=& S^{1-\varepsilon}(n)f(k_1,\ldots,k_{|K|})\\
        &=& a^{1-\varepsilon}(n) ( \log a(n))^{1-\varepsilon}  \le a^{1-\varepsilon}(n) a^{\delta}(n)f(k_1,\ldots,k_{|K|})\text{, for any }\delta>0.
\end{eqnarray*}

If we choose $0<\delta<\varepsilon$, then $a^{1-\varepsilon+\delta}(n)f(k_1,\ldots,k_{|K|})=a^{1-\varepsilon'}(n)f(k_1,\ldots,k_{|K|})$ for $\varepsilon'=\varepsilon-\delta$, which is an FPI improvement on the running time of the algorithm, since we can simulate the circuit of size $S'(n)$ in linear time.

``$\Leftarrow$'':

 if there is an algorithm and a parameter set $K$ for which the running time is $a^{1-\varepsilon}(n)f(k_1,\ldots,k_{|K|})$, then we can simulate it with a circuit of size:
\begin{align*}
S'(n) &= a^{1-\varepsilon}(n)f(k_1,\ldots,k_{|K|})\log\left( a^{1-\varepsilon}(n) f(k_1,\ldots,k_{|K|})\right)\\ &= 
a^{1-\varepsilon}(n)\left(f(k_1,\ldots,k_{|K|})\log a^{1-\epsilon}(n)+f(k_1,\ldots,k_{|K|})\log(f(k_1,\ldots,k_{|K|}))\right) \\&= a^{1-\varepsilon}(n) \log a^{1-\varepsilon}(n)f'(k_1,\ldots,k_{|K|}) \leq a^{1-\varepsilon}(n) a^\delta(n) f'(k_1,\ldots,k_{|K|})\text{, for any }\delta>0.
\end{align*}
\begin{remark}
In the scope of parameterized complexity, we can transform the addition in the second line into multiplication, since it is equivalent, as seen in \cite{DF13}. 
\end{remark}

If we choose $0<\delta<\varepsilon$, then $S'(n)=a^{1-\varepsilon+\delta}(n)f'(k_1,\ldots,k_{|K|})=a^{1-\varepsilon'}(n)f'(k_1,\ldots,k_{|K|})\le S^{1-\varepsilon'}(n)f'(k_1,\ldots,k_{|K|})$, for $\varepsilon'=\varepsilon-\delta$.
\end{proof}

\section{Conclusion}
In this work we have introduced a framework for fine-grained reductions that can capture a deeper connection between the problems involved, namely, a correlation among their parameters. We have shown that this framework captures the essence of the fine-grained approach without restricting the results. As a byproduct of our analysis, we defined and studied the structure of improvable problems, and the implications of fine-grained reductions on such problems. Finally, we produced a fixed parameter improvement in the running time of the OV problem by utilizing its parametric correlation to the $3/2$-approx-diameter problem. 

A notable discussion in this field, is whether or not this framework can be used to define a complexity class, since FPI as a property has some unusual features. Specifically, the inherent meaning of "hardness" that arises, results in the absence of maximal elements (at least currently) in the partial ordering defined by parameterized fine-grained reductions. 
Additionally, because of the conjectured nature of our notion of improvements, the property $FPI(A,K)$ is directly related to previous work on each problem. It is possible that a parameterized algorithm may be proven sub-optimal in the case a problem's conjectured best running time is updated, resulting in disproving said property. As such, if problems having this property are considered a class, inclusion in this class could be negated after the fact, which is inconsistent with traditional complexity classes. 

Using this framework, one can follow the direction of Theorem \ref{ovtodiam} to produce parameterized improvements via the transitivity of PFGR. This analysis can be done for each reduction in the fine-grained reduction web, producing a wide variety of improved algorithms on many interesting problems.

A natural question to consider is the relation between our work and traditional parameterized approach. As seen in Theorem \ref{NPFPI}, it remains an open problem to find the exact relation between FPI and FPT, that is, to formally characterize the problems in FPT that are not FPI. Additionally, one could potentially utilize the plethora of results available through the framework on parameter tractable or harder problems. All of these results may be translated to our terminology given the appropriate assumptions.

\bibliography{refs}



\end{document}